\title{Space Hardness of Solving Structured Linear Systems} 
\author{Xuangui Huang}{Northeastern University, U.S.}{stslxg@ccs.neu.edu}{}{}
\authorrunning{X. Huang} 
\keywords{linear system solver, logarithmic space, threshold circuit} 
\newtheorem{fact}[theorem]{Fact}
\renewcommand{\epsilon}{\varepsilon}
\renewcommand{\tilde}{\widetilde}
\newcommand{\AC}{\mathsf{AC}}
\newcommand{\TC}{\mathsf{TC}}
\newcommand{\ZZ}{\mathbb{Z}}
\newcommand{\RR}{\mathbb{R}}
\newcommand{\mtt}[1]{\mathtt{#1}}
\def\*#1{\mathbf{#1}}
\def\+#1{\mathcal{#1}}
\def\-#1{\mathbb{#1}}
\newcommand{\MCG}{\+{MC}_2\mathtt{Gadget}}
\newcommand{\NNV}{\mtt{NumGadget}}
\newcommand{\CB}{\mtt{CountBit}}
\newcommand{\Len}{\mtt{Len}}
\newcommand{\SNNV}{\mtt{SumNumG}}
\newcommand{\PreS}{\mtt{PrefixSum}}
\newcommand{\mn}[1]{}
\newcommand{\xn}[1]{}
\begin{document}

\global\long\def\poly{\mathrm{{poly}}}%
\global\long\def\polylog{\mathrm{{polylog}}}%

\global\long\def\zo{\mathrm{\{0,1\}}}%

\global\long\def\mo{\mathrm{\{-1,1\}}}%

\global\long\def\e{\mathrm{\epsilon}}%

\global\long\def\eps{\mathrm{\epsilon}}%

\global\long\def\E{\mathrm{\mathbb{E}}}%

\global\long\def\F{\mathrm{\mathbb{F}}}%

\global\long\def\P{\mathrm{\mathbb{P}}}%

\maketitle

\begin{abstract}
We show that if the probabilistic logarithmic-space solver or the deterministic nearly logarithmic-space solver for undirected Laplacian matrices can be extended to solve slightly larger subclasses of linear systems, then they can be use to solve all linear systems with similar space complexity.
Previously Kyng and Zhang \cite{KyngZ17} proved similar results in the time complexity setting using reductions between approximate solvers.
We prove that their reductions can be implemented using constant-depth, polynomial-size threshold circuits.
\end{abstract}

\section{Introduction}
One of the oldest mathematical problems is to solve a linear system, that is to find a solution $\*x$ satisfying $\*A\*x = \*b$ given an $n \times n$  matrix $\*A$ and a $n$-dimensional vector $\*b$ as input.
In the RealRAM model this can be done in $O(n^{\omega})$ time, where $\omega \leq 2.3728$ \cite{Gall14a} is the matrix multiplication constant.
Much faster algorithms exist for approximately solving linear systems when $\*A$ is the Laplacian of undirected graphs.
Indeed recent breakthroughs showed that it can be done in nearly linear time \cite{SpielmanT14,CohenKMPPRX14}.
Kyng and Zhang \cite{KyngZ17} further showed that if such solvers can be extended to nearly linear time solvers for some classes slightly larger than undirected Laplacians, we can also solve general linear systems in nearly linear time.

In this paper we are interested in the space complexity of this problem.
For general linear systems Ta-shma gave a quantum algorithm using logarithmic space \cite{Ta-Shma13}.
For undirected Laplacian, Doron et al. showed that it has a probabilistic logarithmic-space algorithm \cite{DoronGT17} and hence a deterministic $O(\log^{3/2} n)$-space algorithm by a well-known space-efficient derandomization result \cite{SaksZ95}.
This was improved later to $\tilde{O}(\log n)$ by Murtagh et al \cite{MurtaghRSV17}.

\subsection{Our results}
We prove a space hardness version of Kyng and Zhang's results \cite{KyngZ17}, showing space hardness of approximate linear system solvers for some classes slightly larger than undirected Laplacians, namely multi-commodity Laplacians, 2D Truss Stiffness Matrices, and Total Variation Matrices.

\begin{theorem}\label{thm:hard}
Suppose that for multi-commodity Laplacians, 2-D Truss Stiffness Matrices, or Total Variation Matrices, the linear system $\*A\*x = \*b$ can be approximately solved in (nearly) logarithmic space with logarithmic dependence on condition number $\kappa$ and accuracy $\e^{-1}$ (even if it only works in expectation or with high probability), then any linear systems with polynomially bounded integers and condition number can be solved in (nearly) logarithmic space with high accuracy (in expectation or with high probability, respectively).
\end{theorem}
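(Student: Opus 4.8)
The plan is to follow the reduction chain of Kyng and Zhang~\cite{KyngZ17}, which shows that approximately solving linear systems in each of these structured classes is as hard as approximately solving arbitrary linear systems with polynomially bounded integer entries and polynomially bounded condition number, and to argue that every reduction in the chain — both the forward map (building the structured instance $(\*A',\*b')$ from the input $(\*A,\*b)$) and the backward map (recovering an approximate solution $\*x$ to $\*A\*x=\*b$ from an approximate solution $\*x'$ to $\*A'\*x'=\*b'$) — can be carried out by a uniform family of constant-depth, polynomial-size threshold circuits, i.e.\ in $\TC^0$. Since $\TC^0\subseteq\mathsf{L}$ and, more importantly, a $\TC^0$ reduction composes with any (nearly) logarithmic-space algorithm without leaving (nearly) logarithmic space (compute each needed bit of the circuit's output on demand in logspace, feeding it to the structured solver), plugging the hypothesized structured solver into the middle of such a reduction yields a (nearly) logarithmic-space solver for general systems. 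The probabilistic variants need no extra work: the forward and backward maps are \emph{deterministic} $\TC^0$ circuits, so a structured solver that succeeds in expectation or with high probability composes to a general solver with the same guarantee.

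Concretely, I would first record the Kyng--Zhang reduction steps: an arbitrary system is symmetrized and embedded in a block system with a positive semidefinite matrix, which is then realized, for each of the three target classes $\+C$, inside $\+C$ (possibly factoring through the other classes) by the ``gadget'' constructions that replace individual entries or small blocks of the current matrix with fixed small blocks drawn from $\+C$ and pad the right-hand side accordingly. Each such construction is highly local: every entry of $\*A'$ and $\*b'$ is either a copy of an input entry, a fixed rational constant, or a simple arithmetic combination (sum, difference, product, comparison) of a constant number of input entries and index quantities, and the sparsity pattern is governed by simple arithmetic on the indices. Extracting a designated bit of such an entry — including the bookkeeping that locates which gadget a given index falls into, which amounts to integer division and comparison — is exactly what $\TC^0$ does: iterated addition, iterated multiplication, division, and comparison of polynomially many polynomially-long integers all lie in $\TC^0$. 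The backward map is likewise a fixed affine map applied blockwise, hence $\TC^0$, and it is Lipschitz with a polynomially bounded constant, so an $\eps'$-approximate solution to the structured system maps to a $(\poly(n)\cdot\eps')$-approximate solution to the original; taking $\eps'=\eps/\poly(n)$ keeps $\log(1/\eps')=O(\log(1/\eps)+\log n)$.

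I would then track the parameter blowups through the chain: each reduction increases the dimension by at most a polynomial factor and the condition number by at most a polynomial factor, and the entries remain polynomially bounded integers (after clearing denominators, itself a $\TC^0$ iterated-multiplication step). Hence $\log\kappa'=O(\log\kappa+\log n)$ and $\log(1/\eps')=O(\log(1/\eps)+\log n)$, so a structured solver running in space $s(N,\log\kappa',\log(1/\eps'))$ with $s$ (nearly) logarithmic becomes, after composition with the $\TC^0$ pre- and post-processing, an algorithm for the original system running in (nearly) logarithmic space in $n$, $\log\kappa$, and $\log(1/\eps)$ — using the standard fact that if $f$ is $\TC^0$-computable (hence logspace-computable with $\poly$-bounded output) and $g$ is computable in space $s(\cdot)$, then $g\circ f$ is computable in space $O(\log)+s(\poly(\cdot))$, together with the analogous statement for randomized space.

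The step I expect to be the main obstacle is verifying the $\TC^0$ bound for the gadget constructions in full detail — in particular, making sure that nothing in the Kyng--Zhang reductions secretly requires an unbounded iteration (for instance, a gadget whose size depends on the magnitude of an entry, or an inductive construction with super-constantly many rounds), since such a step would push the reduction out of $\TC^0$ and potentially out of logarithmic space; and relatedly, confirming that the backward map's Lipschitz constant and all normalization factors are polynomially bounded, so that accuracy and condition number degrade only polynomially. Once each gadget is shown to consist of $O(1)$-size blocks with fixed rational entries and index arithmetic that is pure $\TC^0$ bookkeeping, the remaining argument is routine composition.
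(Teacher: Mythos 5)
Your overall strategy is the same as the paper's: show that each step of the Kyng--Zhang reduction chain is computable by uniform $\TC^0$ circuits, then compose the $\TC^0$ pre- and post-processing with the hypothesized structured solver using the standard space-bounded composition argument (which also handles the in-expectation and with-high-probability variants, since the reduction is deterministic and Karp-style). That part of your plan, and your parameter tracking, are fine.

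The gap is in the step you yourself flag as the main obstacle, and your preliminary description of the gadget constructions as ``highly local'' --- each output entry a combination of $O(1)$ input entries, with sparsity governed by simple index arithmetic --- is not accurate for the core reduction from $\+G_{z,2}$ to $\+{MC}_2$. That reduction is the pair-and-replace scheme: for each row, each sign, and each bit position $k=1,\dots,w$, it repeatedly pairs two variables whose coefficient has bit $k$ set and replaces them by a fresh variable with coefficient $2^{k+1}$, where the variables paired in round $k$ include the auxiliary variables \emph{created in round $k-1$}. So the number of gadgets, the row/column position of each gadget, and the identity of the two variables $j_1,j_2$ entering a given gadget all depend on the entire history of a $w$-round iteration --- exactly the kind of ``inductive construction with super-constantly many rounds'' you warn about, and it is not resolved by integer division and comparison alone. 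The paper's resolution is to unroll the recursion into a closed form: the number of gadgets used in row $i$ for sign $s$ at bit $k$ satisfies $\NNV^s_{i,k}(\*A)=2^{-(k+1)}\sum_{k'=1}^{k}2^{k'}\CB^s_{i,k'}(\*A)$, where $\CB^s_{i,k}$ counts original entries with bit $k$ set; integrality of this expression relies crucially on the $\+G_{z,2}$ normalization that the positive coefficients of each row sum to a power of two (which is why the chain passes through $\+G_{z,2}$ at all). Once these counts are available in closed form, every index in the output matrix is a prefix sum of $\NNV$'s and $\CB$'s, computable by iterated addition, and locating the gadget containing a given output row reduces to parallel comparison against these prefix sums. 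Without this (or an equivalent) explicit parallelization of the sequential pairing process, ``routine composition'' does not go through, because the natural implementation of the reduction is a $w$-round sequential algorithm, not a constant-depth circuit.
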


This shows that if the probabilistic logspace solver for undirected Laplacian in \cite{DoronGT17}, or the deterministic $\tilde{O}(\log n)$-space one in \cite{MurtaghRSV17}, can be extended to solve any of these three slightly larger subclasses of linear systems, we would have a surprising result that all linear systems can be approximately solved in probabilistic logspace or in deterministic $\tilde{O}(\log n)$-space.
Pessimistically speaking the theorem means that it is very hard to get space efficient algorithms for these subclasses, as it is as difficult as solving all linear systems in a space efficient way.
On the bright side, we actually prove that any progress on solving these subclasses using less space will immediately translate into similar progress for solving all linear system using less space.

Kyng and Zhang \cite{KyngZ17} proved their results via reductions from approximate solvers of general linear systems to those of three subclasses.
In this paper we prove Theorem \ref{thm:hard} by proving that their reductions can be carried out in a space efficient manner.
Indeed we prove a much stronger result that their reductions can be implemented in $\TC^0$ circuits, 
which are constant-depth polynomial-size unbounded-fan in circuits with $\mathsf{MAJORITY}$ and $\neg$ gates.
It shows that these reductions are actually highly parallelizable.

We denote $\+G$ as the class of all matrices with integer valued entries.
In the context of solving linear systems, an all-zero row or column can be trivially handled, so we can assume without loss of generality that matrices in $\+G$ has at least one non-zero entry in every row and column.
For 2-commodity matrices $\+{MC}_2$, we have two set of variables $X$ and $Y$ of the same size, and the equations are scalings of $x_i - x_j = 0$, $y_i - y_j = 0$, and $x_i - y_i - (x_j - y_j) = 0$, where $x_i, x_j \in X$ and $y_i, y_j \in Y$.
This generalizes undirected Laplacians, as the incidence matrices of undirected Lapacians only have equations of the form $x_i - x_j = 0$ for $x_i, x_j \in X$.

Our main technical result proves that the reduction from $\+G$ to $\+{MC}_2$ in \cite{KyngZ17} is $\TC^0$-computable.
\begin{theorem}\label{thm:main}
There is a $\TC^0$-reduction from approximately solving $\+G$ to approximately solving $\+{MC}_2$.
\end{theorem}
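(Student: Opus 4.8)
The plan is to follow the structure of Kyng and Zhang's reduction \cite{KyngZ17} and to observe that it is a composition of a constant number of gadget-based transformations, each of which is a \emph{local replacement}: every entry, equation, or variable of the current system is replaced by a gadget of polynomial (indeed often $\polylog(n)$) size, and correspondingly every coordinate of a solution to the new system either equals, or is a simple fixed affine function of, a coordinate of a solution to the old one. Since $\TC^0$ is closed under composition, it suffices to show that each individual stage is $\TC^0$-computable in both directions: the forward map, producing the description of the new matrix $\*M$ and right-hand side from that of the old pair $(\*A,\*b)$; and the backward map, extracting an approximate solution of $\*A\*x=\*b$ from an approximate solution of $\*M\*z=\*c$. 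The heart of the chain is the gadget $\MCG$ that uses the three allowed $\+{MC}_2$ constraint types to force one variable to equal twice another; iterating it $O(\log\lvert a_{ij}\rvert)=O(\log n)$ times and combining along the bits of $a_{ij}$ yields a ``multiply by $a_{ij}$'' subgadget, and assembling these across a row realizes each original equation.

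For the forward direction, producing the $\poly(n)\times\poly(n)$ matrix $\*M$ in $\TC^0$ amounts to computing, for each output index pair $(p,q)$, first which gadget copy $(p,q)$ belongs to and its role inside that copy, and second the scaling coefficient at that position. The first task is prefix-sum bookkeeping over the gadget sizes (which are either data-independent or depend only on $\lvert a_{ij}\rvert$), i.e.\ iterated addition of $O(\log n)$-bit quantities together with comparisons that locate the right block. The second task depends only on the binary expansion of the integer $a_{ij}$ — which bit a given doubling link represents and whether it is active — so it needs bit extraction and a few arithmetic operations on polynomially bounded integers. All of these (iterated addition, iterated multiplication, integer division, bit selection, comparison) are standard $\TC^0$ primitives, and the whole layout is logspace-uniform, so the forward map lies in uniform $\TC^0$.

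For the backward direction, Kyng and Zhang's construction designates a fixed subset of the coordinates of $\*z$ as the ``real'' variables and shows that reading them off — after at most a fixed affine correction coming from the grounding variable used to anchor the gadgets — yields a solution of the original system; this is a projection composed with an affine map with $\poly(n)$-bounded integer coefficients, hence trivially in $\TC^0$, and because it is affine a solution correct in expectation (resp.\ with high probability) maps to one correct in expectation (resp.\ with high probability). The quantitative part — that the dimension, the bit-length, the condition number $\kappa$, and the inverse accuracy $\e^{-1}$ each blow up by at most a polynomial factor per stage, so that ``(nearly) logarithmic space with logarithmic dependence on $\kappa$ and $\e^{-1}$'' is preserved up to constant factors after $O(1)$ stages — is exactly the analysis of \cite{KyngZ17} and is orthogonal to the circuit-complexity claim; we only need the trivial observation that a polynomial dimension blow-up keeps the reduction $\poly(n)$-size.

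The main obstacle I anticipate is the bookkeeping in the forward direction: verifying that the adjacency and coefficient structure of the ``multiply by $a_{ij}$'' subgadget — a chain of $\Theta(\log\lvert a_{ij}\rvert)$ copies of $\MCG$ wired together according to the bits of $a_{ij}$, plus a combining stage — can be produced by a single constant-depth threshold circuit given the indices, with the depth staying constant across all $O(1)$ stages and the circuit staying uniform, rather than implicitly requiring an iterated matrix product or an $\Omega(\log n)$-deep accumulation. A secondary subtlety is to check that every intermediate system genuinely stays inside $\+{MC}_2$'s prescribed form — scalings of the three constraint types over two equal-size variable blocks — so that the final stage outputs a legal instance of the promised class.
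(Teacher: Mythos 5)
You correctly identify the overall strategy---decompose the Kyng--Zhang chain into $O(1)$ stages, show each stage's forward and backward maps are $\TC^0$-computable, and handle the index bookkeeping by prefix sums over gadget sizes---and you correctly flag the forward-direction bookkeeping as the main obstacle. But you leave exactly that obstacle unresolved, and it is where essentially all the work lies. Moreover the architecture you sketch is not the one that works: an equation $\sum_j a_{ij}x_j = b_i$ with many nonzero entries cannot be realized by building a ``multiply by $a_{ij}$'' subgadget per entry and then ``assembling across the row,'' because $\+{MC}_2$ only admits scalings of $x_i-x_j=0$, $y_i-y_j=0$, and $x_i-y_i-(x_j-y_j)=0$, so there is no legal way to sum $n$ subgadget outputs. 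The actual reduction first normalizes to zero row sums whose positive mass is a power of two ($\+G\to\+G_z\to\+G_{z,2}$), and then runs a sequential pair-and-replace loop: for each row, sign, and bit position $k$ it repeatedly merges two variables sharing a set $k$-th bit into one fresh variable via a gadget encoding $2x_t=x_{j_1}+x_{j_2}$ (not ``one variable equals twice another''), until the row collapses to a single scaled difference $2^{\Len^+_i}(x_{j_+}-x_{j_-})=c_i$, which is a legal $\+{MC}_2$ equation. The $\TC^0$ claim then requires showing that this inherently sequential, data-dependent process has output entries computable directly from the input. The paper does this by deriving a closed form for the number of gadgets $\NNV^s_{i,k}$ spent on each (row, sign, bit) triple, namely $2^{-(k+1)}\sum_{k'\leq k}2^{k'}\CB^s_{i,k'}$ where $\CB^s_{i,k}$ counts original entries of sign $s$ with bit $k$ set (integrality being guaranteed precisely by the power-of-two normalization), and then locating any output row and column by parallel comparison of the index against prefix sums of these quantities, with a further case analysis (pairs of original variables versus pairs of previously generated ones, plus an odd-count boundary case) to recover $(j_1,j_2)$ inside each gadget. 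Your proposal asserts that prefix-sum bookkeeping suffices but supplies no analogue of this closed form; without it, the natural computation of the gadget layout unrolls a $w$-round recurrence rather than a constant-depth circuit.

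Two smaller points. For the approximate (rather than exact) version, the circuit must also compute the per-row scaling factors $w_i=\sqrt{10\sum_{s}\SNNV^{s}_i}$ and the accuracy parameters of the reduced instances; this is part of the circuit-complexity claim, not orthogonal to it, and is handled via $\TC^0$ division, iterated multiplication, and power-series approximation of square roots. Also, the backward map must additionally test whether $\*A^\top\*b=\*0$ and return $\*0$ in that case; this is minor but needed for correctness of the approximate reduction.
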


In \cite{KyngZ17} it is shown that the matrix produced by this reduction is also a 2D Truss Stiffness Matrix as well as a Total Variation Matrix, therefore Theorem \ref{thm:main} also works for these classes.
Also note that this reduction is a Karp-style reduction, i.e. it requires only one linear system solve and uses the solution in a black-box way.
That is why Theorem \ref{thm:hard} still applies if the solver only works in expectation or with high probability.

We also show $\TC^0$-computability of the reductions in \cite{KyngZ17} to some more restrictive subclasses of $\+{MC}_2$, including $\+{MC}^>_2$, the exact class we have to solve when we use Interior Point Methods for 2-commodity problems, as explained in the Kyng and Zhang's full paper \cite{KyngZ17full}.
They also showed that the promise problem of deciding if a vector is in the image of a matrix or $\e$-far from the image can be directly reduced to approximately solving linear systems.
Combining with the above results, this shows that the promise problem can be reduced to approximately solving the above-mentioned subclasses in $\TC^0$.

\section{Simplified reductions in $\TC^0$: the easy parts}
Throughout this paper we use the sign-magnitude representation to encode a $w$-bit signed integer $z \in [-2^{w+1}+1, 2^{w+1}-1]$ into a sign bit, $0$ for positive and $1$ for negative, and $w$ bits for $|z|$ in binary. Note that in this method, $0$ has two representations and we accept both.

For simplicity we first prove the special case for the reduction from exact solver of $\+G$ to exact solver of $\+{MC}_2$.
\begin{theorem}
Given an $m \times n$ $w$-bit matrix $\*A \in \+G$ and a vector $\*b$ as input, we can compute in $\TC^0$ an $O(nmw \log n) \times O(nmw \log n)$ $O(w \log n)$-bit matrix $\*A' \in \+{MC}_2$ and a vector $\*b'$ such that:
\begin{itemize}
	\item $\*A\*x = \*b$ has a solution if and only if $\*A'\*x' = \*b'$ has a solution;
	\item if $\*A'\*x' = \*b'$ has a solution $\*x'$, then we can compute $\*x$ in $\TC^0$ from $\*x'$ so that $\*A\*x = \*b$.
\end{itemize}
\end{theorem}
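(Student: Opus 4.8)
The plan is to follow the Kyng--Zhang reduction $\+G \to \+{MC}_2$ step by step and verify that each individual gadget construction, as well as the assembly of gadgets into the final system, only requires operations that lie in $\TC^0$: iterated addition, iterated multiplication of polynomially many polynomially-bounded integers, integer comparison, sorting, and prefix sums of bit-vectors --- all of which are classical $\TC^0$ primitives. Concretely, I would start from the observation that a general equation $\sum_j A_{ij} x_j = b_i$ with $w$-bit coefficients can be simulated by $\+{MC}_2$-type equations of the form $x_i - x_j = 0$, $y_i - y_j = 0$, and $(x_i - y_i) - (x_j - y_j) = 0$ by introducing auxiliary variables that encode partial sums and scaled copies of the $x_j$'s. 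The blow-up to $O(nmw\log n)$ variables reflects that each of the $m$ equations spends $O(nw\log n)$ auxiliary variables: roughly $O(w)$ bits to realize each coefficient $A_{ij}$ as a sum of powers of two, a further $\log n$ factor to build a balanced binary summation tree over the $n$ terms so that every new equation is a $3$-term $\+{MC}_2$ equation rather than an unbounded sum, and the matching $Y$-copies. Since the structure of this graph of gadgets is completely determined by $m$, $n$, $w$ and by which bits of each $A_{ij}$ are set, every entry of $\*A'$ and $\*b'$ is a local, easily-indexable function of the input --- exactly the kind of uniformity $\TC^0$ needs.

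The key steps, in order, would be: (1) fix the sign-magnitude encoding conventions and describe the ``coefficient gadget'' that, given a single entry $A_{ij}$, produces $\+{MC}_2$-equations forcing an auxiliary variable to equal $A_{ij}\cdot x_j$; this uses only bit-tests on $|A_{ij}|$ and doublings, so it is clearly $\TC^0$, and one must check the resulting equations genuinely lie in $\+{MC}_2$ (scalings of the three allowed forms). (2) Describe the ``summation gadget'': arrange the $n$ auxiliary variables for row $i$ into a binary tree of depth $O(\log n)$ whose internal nodes hold partial sums, with one $3$-term equation per internal node, and attach $b_i$ at the root; indexing a node of this tree and its children is a simple arithmetic function of $i$ and the node's position, hence $\TC^0$. (3) Assemble all $m$ row-gadgets, allocate global variable indices by a prefix-sum over the per-row gadget sizes (prefix sum of a fixed arithmetic sequence, so $\TC^0$), and emit the final sparse matrix $\*A'$ and vector $\*b'$ in whatever standard encoding is used, padding to the claimed $O(nmw\log n)$ dimension. (4) Prove the two bullet points: forward, any solution $\*x$ of $\*A\*x=\*b$ extends to a solution $\*x'$ of $\*A'\*x'=\*b'$ by setting each auxiliary variable to the partial sum it was designed to hold (and the $Y$-variables to, say, $0$); backward, given $\*x'$ one simply reads off the coordinates $x_j$ corresponding to the original variables --- a projection, trivially $\TC^0$ --- and uses the chained equations to telescope the partial sums back to $\sum_j A_{ij}x_j = b_i$. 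The equivalence of solvability then follows immediately from these two directions.

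The main obstacle I anticipate is bookkeeping rather than depth: making the indexing scheme for the auxiliary variables fully explicit and uniform, so that ``the $k$-th node of the summation tree for row $i$'' maps to a concrete coordinate of $\*x'$ computable by a constant-depth threshold circuit, and conversely so that given a coordinate of $\*A'$ one can in $\TC^0$ decide which gadget and which of the three equation-types it belongs to and output the correct (scaled) entry. This is where one has to be careful that the $\log n$ blow-up from the summation trees does not secretly require computing something like an iterated product over a superpolynomially large range or an unbounded-precision logarithm; in fact all that is needed is prefix sums and comparisons over $\poly(n,m,w)$-length sequences, which $\TC^0$ handles. A secondary, more mathematical point to get right is verifying that every equation emitted is an exact scaling of one of the three $\+{MC}_2$ forms --- in particular that the coefficient gadget's doublings never force an equation outside the class --- and that the bit-length of $\*A'$ really is $O(w\log n)$, which bounds the scalings by $\poly(n)\cdot 2^{O(w)}$ and hence keeps all arithmetic within the polynomially-bounded regime where the $\TC^0$ primitives apply.

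Once this special case (exact solver to exact solver) is established, Theorem~\ref{thm:main} --- the approximate version --- will follow by the same gadget construction combined with the standard quantitative analysis from \cite{KyngZ17} controlling how condition number $\kappa$ and accuracy $\epsilon$ propagate through the reduction; the only additional $\TC^0$ check there is that the post-processing which converts an approximate $\*x'$ into an approximate $\*x$ (a projection followed by at most $O(\log n)$ levels of additions to undo the summation tree) stays in $\TC^0$, which it does.
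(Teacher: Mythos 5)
Your high-level claims about $\TC^0$ (prefix sums, comparisons, iterated addition, uniform indexing of gadgets) are all in the spirit of the paper's proof, and the paper does indeed spend most of its effort on exactly the bookkeeping you anticipate. But the gadget-level plan you describe has a genuine gap: the class $\+{MC}_2$ contains only scalings of $x_i - x_j = 0$, $y_i - y_j = 0$, and $x_i - y_i - (x_j - y_j) = 0$, so ``one $3$-term equation per internal node'' of a summation tree (an equation of the form $z = u + v$) is simply not an $\+{MC}_2$ equation, and neither is the root equation $C x_{\mathrm{root}} = b_i$ you propose to attach. Every addition has to be simulated by a multi-equation gadget mixing $X$- and $Y$-variables, and the gadget can only enforce $2x_t = x_{j_1} + x_{j_2}$ --- the factor of $2$ is forced, since the relation is obtained by summing difference equations. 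Your plan never accounts for how these accumulating powers of two are balanced so that the whole system closes up inside $\+{MC}_2$. The paper resolves this with two preprocessing reductions you omit ($\+G \to \+G_z$, adding a column to make every row sum zero, and $\+G_z \to \+G_{z,2}$, padding so each row's positive-coefficient sum is a power of two) followed by a bitwise pair-and-replace scheme: at bit level $k$ it pairs variables whose coefficient has bit $k$ set and replaces $2^k(x_{j_1}+x_{j_2})$ by $2^{k+1}x_{\mathrm{new}}$, so that each row collapses to $2^{\ell}(x_{j_+} - x_{j_-}) = c_i$, which \emph{is} a legal $\+{MC}_2$ equation precisely because of the zero-row-sum and power-of-two invariants.

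A second, smaller error: in your forward direction you suggest setting the $Y$-variables to $0$. In any gadget of this type the $Y$-variables are load-bearing --- with all $y$'s zero the coupled equations degenerate to $x_t = x_{j_1}$ and $x_t = x_{j_2}$, over-constraining the system to $x_{j_1} = x_{j_2}$ rather than enforcing $2x_t = x_{j_1} + x_{j_2}$. The correct extension of a solution assigns specific nonzero values to the auxiliary $Y$-variables. Your backward direction (projection onto the original coordinates) and your $\TC^0$ indexing strategy are correct and match the paper.
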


As in \cite{KyngZ17}, this reduction is split into several steps using the following classes of matrices.
\begin{definition}[\cite{KyngZ17}]
\begin{itemize}
\item Let $\+G_z \subset \+G$ denote the class of matrices with integer valued entries such that every row has zero row sum;
\item Let $\+G_{z,2} \subset \+G_z$ denote the class of matrices with integer valued entries such that every row has zero row sum, and for each row the sum of positive coefficients is a power of $2$.
\end{itemize}
\end{definition}

\begin{lemma}\label{lem:reduc}
There are $\TC^0$-reductions for exact solvers of the following classes:
\begin{enumerate}[(i)]
\item\label{red:1} from $\+G$ to $\+G_z$;
\item\label{red:2} from $\+G_z$ to $\+G_{z,2}$;
\item\label{red:3} from $\+G_{z,2}$ to $\+{MC}_2$.
\end{enumerate}
\end{lemma}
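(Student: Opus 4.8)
The plan is to follow the three-step structure of Kyng--Zhang and, for each step, exhibit an explicit gadget construction together with a $\TC^0$ implementation, rather than just asserting logspace computability. For reduction~(\ref{red:1}), from $\+G$ to $\+G_z$: given a row $\sum_j a_{ij} x_j = b_i$ whose coefficients need not sum to zero, I would introduce a single fresh ``grounding'' variable $x_0$ and replace the equation by $\sum_j a_{ij} x_j - (\sum_j a_{ij}) x_0 = b_i$, which now has zero row sum; one extra equation (or a shared $x_0$ across all rows) pins $x_0$. Recovering $\*x$ from $\*x'$ is just dropping the $x_0$ coordinate, trivially in $\TC^0$. The only arithmetic involved is computing each row sum $\sum_j a_{ij}$, which is iterated addition of $n$ many $w$-bit integers; this is the standard $\TC^0$ primitive (iterated integer addition / multiplication is in $\TC^0$), so the whole step is $\TC^0$, and the bit-length grows by only $O(\log n)$.

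For reduction~(\ref{red:2}), from $\+G_z$ to $\+G_{z,2}$: here each row already has zero row sum, so the sum $P_i$ of its positive coefficients equals the absolute value of the sum of its negative coefficients. I need to rescale the row so that this common value becomes a power of two. The idea is to multiply the row by an appropriate integer and then ``pad'' by adding a telescoping chain of new variables so the positive part inflates to $2^{\lceil \log P_i \rceil}$ while preserving the solution set. Concretely, to raise the positive-coefficient sum from $P_i$ to $2^k \ge P_i$ one introduces auxiliary variables together with equations of the form $u - v = 0$ whose coefficients soak up the difference $2^k - P_i$ in binary; these auxiliary equations force the new variables to be ``free'' in a controlled way so that any solution of the padded system restricts to a solution of the original and vice versa. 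The $\TC^0$ content is: computing $P_i$ (iterated addition), computing $\lceil \log_2 P_i \rceil$ and the binary expansion of $2^k - P_i$ (bit extraction, comparison --- all $\TC^0$), and writing down the corresponding gadget rows in parallel. Bit-lengths and dimensions blow up by $\poly(w,\log n)$ factors, consistent with the sizes claimed in the theorem.

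For reduction~(\ref{red:3}), from $\+G_{z,2}$ to $\+{MC}_2$: this is the heart of the matter and I expect it to be the main obstacle. A single equation $\sum_j a_{ij} x_j = b_i$ with zero row sum and positive part a power of $2$ must be simulated by scalings of the three allowed $2$-commodity patterns $x_i - x_j = 0$, $y_i - y_j = 0$, $x_i - y_i - (x_j - y_j) = 0$. The power-of-two structure is exactly what lets one realize a general integer coefficient by a binary ``doubling tree'': repeatedly using the pattern $x - y - (x' - y') = 0$ one can build, from unit differences, differences scaled by arbitrary powers of two, and then combine $O(w \log n)$ of them to assemble each coefficient $a_{ij}$ and the target $b_i$. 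This is where the $O(nmw\log n)$ dimension comes from --- one chain of intermediate variables per (row, column, bit) triple. The delicate points are (a) checking that the gadget's solution space projects exactly onto that of the original row (no spurious constraints, no lost solutions), which is a linear-algebra verification I would do by exhibiting the projection and its section explicitly, and (b) verifying that the index bookkeeping --- which fresh variable plays which role in which gadget --- is computable by a $\TC^0$ circuit: since every index is a fixed arithmetic function of $(i,j,\text{bit position})$ computable by iterated addition and multiplication, and the nonzero entries of $\*A'$ are placed at positions that are likewise simple arithmetic functions of the input indices and the bits of $a_{ij}$, the matrix $\*A'$ is $\TC^0$-computable entrywise. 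Finally, reconstructing $\*x$ from $\*x'$ is reading off the coordinates corresponding to the original variables, again $\TC^0$. Composing the three $\TC^0$ reductions yields a $\TC^0$ reduction from $\+G$ to $\+{MC}_2$, proving the lemma (and, by composition, Theorem~\ref{thm:main}).
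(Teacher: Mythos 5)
Your overall three-step decomposition matches the paper's, and steps (\ref{red:1}) and (\ref{red:2}) are essentially the paper's constructions, but there are two genuine gaps. First, a small one in (\ref{red:1}): you cannot ``pin'' the grounding variable $x_0$ with an extra equation while staying in $\+G_z$, since any zero-row-sum system is invariant under adding a constant to all variables, so no such equation can force $x_0=0$. Consequently recovering $\*x$ by ``just dropping the $x_0$ coordinate'' is wrong: if $\*x'$ solves the new system with $x'_0=c\neq 0$, dropping gives $\*A\*x=\*b+c\,\*A\*1\neq\*b$. The fix (used in the paper) is to subtract $x'_0$ from every other coordinate, which is still $\AC^0$/$\TC^0$, so this is repairable but as stated the recovery fails.

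The more serious gap is in (\ref{red:3}), which you correctly identify as the heart of the matter but do not actually carry out. Two things are missing. (a) The equation ``$x_i-y_i-(x_j-y_j)=0$'' only asserts equality of two differences; it does not by itself give doubling or addition. The crux of the reduction is an explicit gadget --- a specific set of eight (or ten) legal $\+{MC}_2$ equations whose sum is $2x_t=x_{j_1}+x_{j_2}$ --- and a pair-and-replace scheme that repeatedly merges two variables sharing a set bit into one new variable with that bit shifted up. You never exhibit such a gadget, and your per-coefficient ``doubling tree'' does not explain how a multi-term row $\sum_j a_{ij}x_j=b_i$ ends up as a single legal pattern $c(x_u-x_v)=b_i$; that collapse is exactly where the zero-row-sum and power-of-two hypotheses are used (the positive and negative parts each telescope to one variable with coefficient $2^{\Len^+_i}$). (b) The $\TC^0$ bookkeeping is not ``a fixed arithmetic function of $(i,j,\text{bit position})$'' in any obvious way: the number of gadgets spawned in round $k$ of row $i$ obeys a sequential recurrence $\NNV_{i,k}=(\CB_{i,k}+\NNV_{i,k-1})/2$, and one must unroll it to the closed form $\NNV_{i,k}=2^{-(k+1)}\sum_{k'\le k}2^{k'}\CB_{i,k'}$ (integrality again resting on the power-of-two row sums) and then locate each output row via prefix sums and parallel comparisons. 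You flag both points as ``delicate'' but supply neither, and they are precisely the content of the lemma rather than routine verifications.
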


Lemma \ref{lem:reduc} (\ref{red:3}) is the main reduction in this paper (same as in \cite{KyngZ17}), which will be proved in the next section. In the remaining of this section we prove Lemma \ref{lem:reduc} (\ref{red:1}) and (\ref{red:2}).

\paragraph*{From $\+G$ to $\+G_z$}
\begin{proof}[Proof sketch]
Given a matrix $\*A \in \+G$, we can define a matrix $\*A'$ with one more column by $\*A' = \begin{pmatrix}
\*A & -\*A\*1
\end{pmatrix}$.
Obviously $\*A' \in \+G_z$, and $\*A \*x = \*b$ has a solution if and only if $\*A' \*x' = \*b$ has a solution.
We can also recover $\*x \in \mathbb{R}^n$ from $\*x' \in \mathbb{R}^{n+1}$ by taking the first $n$ rows of $\*x'$ and minus each of them by $\*x'_{n+1}$.
The following results about additions imply that $\*A'$ can be calculated in $\TC^0$, and we can recover $\*x'$ from $\*x$ in $\AC^0$ (for simplicity we ignore the precision problem here).
\end{proof}
\begin{fact}\label{fat:add}
\begin{itemize}
\item Addition of $2$ $w$-bit numbers has $\AC^0$ circuit of size $\poly(w)$ (c.f. \cite{CloteK02});\footnote{$\AC^0$ circuits are constant-depth polynomial-size unbounded-fan in circuits with $\wedge$, $\vee$, and $\neg$ gates.}
\item Addition of $n$ $w$-bit numbers has $\TC^0$ circuit of size $\poly(n,w)$ \cite{ImmermanL89}.
\end{itemize}
\end{fact}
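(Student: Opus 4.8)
Both claims are classical, and the plan is to recall the standard constructions. For two $w$-bit operands the tool is the carry-lookahead circuit. For $n$ operands the tool is a reduction to column counts --- essentially what MAJORITY gates compute --- followed by a block decomposition that collapses the resulting weighted sum of $w$ counts into a constant number of ordinary two-operand additions.

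For the two-operand case, write the inputs in binary as $a = a_{w-1}\cdots a_0$ and $b = b_{w-1}\cdots b_0$, let $c_i$ be the carry into position $i$ (with $c_0 = 0$), and use the carry-lookahead identity
\[
c_i \;=\; \bigvee_{0 \le j < i}\Bigl( a_j \wedge b_j \wedge \bigwedge_{j < k < i}(a_k \vee b_k)\Bigr),
\]
which holds because a carry is generated at position $j$ exactly when $a_j = b_j = 1$ and survives through position $k$ exactly when $a_k \vee b_k = 1$. Each $c_i$ is then a constant-depth, size-$\poly(w)$ formula; the output bit $s_i = a_i \oplus b_i \oplus c_i$ is a $3$-bit parity, adding only constant depth and size; and the leading carry-out is $c_w$. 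This is an $\AC^0$ circuit of size $\poly(w)$. For the sign-magnitude encoding of this paper one first does a constant-depth case split on the two sign bits, reducing to addition of the magnitudes (equal signs) or subtraction of magnitudes together with a magnitude comparison to set the sign (opposite signs); subtraction is addition after bit-complementing, and magnitude comparison is $\AC^0$, so the whole construction stays in $\AC^0$.

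For iterated addition of $a^{(1)},\dots,a^{(n)}$, each $w$ bits, let $S := \sum_{k=1}^n a^{(k)}$, which has at most $w + \lceil\log n\rceil$ bits. First I would replace $S$ by $\sum_{j=0}^{w-1} 2^j c_j$, where $c_j := \sum_k a^{(k)}_j \in \{0,\dots,n\}$ is the number of ones in column $j$. The binary representation of such a count is $\TC^0$-computable: the $t$-th bit of the number of ones among $m$ given bits equals $\bigvee_{q\ge 0}\bigl(\mathrm{Th}_{\ge 2^t(2q+1)} \wedge \neg\,\mathrm{Th}_{\ge 2^t(2q+2)}\bigr)$, over the disjoint family of count-intervals whose values carry a $1$ in bit $t$, where $\mathrm{Th}_{\ge r}$ (``at least $r$ of the inputs are $1$'') is a suitably padded MAJORITY gate; this is a constant-depth OR of ANDs of MAJORITY gates of size $\poly(m)$, and the same works for weighted thresholds with polynomially bounded integer weights (pad a weight-$v$ input with $v$ copies). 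It then remains to compute $\sum_{j<w} 2^j c_j$ in $\TC^0$. Set $L := 2\lceil\log(n+1)\rceil$ and cut the column indices into consecutive blocks of length $L$. The contribution of block $t$, namely $2^{tL}\sum_{i=0}^{L-1} 2^i c_{tL+i}$, is a number spanning an interval of fewer than $2L$ bit positions, so contributions of blocks of the same parity occupy pairwise disjoint position ranges, and their sum is simply their bitwise concatenation. Hence $S = S_{\mathrm{even}} + S_{\mathrm{odd}}$, where $S_{\mathrm{even}}$ (resp.\ $S_{\mathrm{odd}}$) concatenates the even- (resp.\ odd-) indexed block contributions; each block contribution is a weighted sum of $L = O(\log n)$ numbers of $O(\log n)$ bits with weights at most $2^L = \poly(n)$, hence $\TC^0$-computable with $\poly(n)$ gates by the preceding paragraph, and the single final addition $S_{\mathrm{even}} + S_{\mathrm{odd}}$ is $\AC^0$ by the two-operand case. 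Composing these stages yields a size-$\poly(n,w)$ $\TC^0$ circuit; a sum of signed integers splits into its positive and negative parts and reduces to two such sums plus one subtraction. Uniformity is clear throughout, since in each stage a gate's incoming wires are determined by a simple arithmetic predicate on its index.

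The step I expect to require the most care is the calibration of the block length $L$: it must be large enough that same-parity blocks do not overlap, which forces $L$ to exceed the bit-width of a single block's contribution (about $\lceil\log(n+1)\rceil$ plus lower-order terms), yet each block contribution must still depend on only polynomially many input bits with polynomially bounded weights, so that its value is computable in constant depth and polynomial size; the choice $L = 2\lceil\log(n+1)\rceil$ threads this needle, and everything else is routine bookkeeping.
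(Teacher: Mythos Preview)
Your proof is correct and follows the standard textbook constructions (carry-lookahead for two operands; column counts plus a block decomposition for iterated addition). However, there is nothing to compare it against: in the paper this is stated as a \emph{Fact} with citations to \cite{CloteK02} and \cite{ImmermanL89} and is not proved at all. So your write-up goes well beyond what the paper does, which is simply to invoke the literature.

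One small presentational point: when you say ``each block contribution is a weighted sum of $L=O(\log n)$ numbers of $O(\log n)$ bits,'' the cleanest way to invoke the threshold machinery you set up in the preceding paragraph is to unfold the $c_j$'s back to the input bits, so that a block contribution becomes a weighted sum of at most $nL$ \emph{bits} with weights bounded by $2^{L-1}=\poly(n)$ and total value $\poly(n)$; then the bit-extraction-by-threshold-intervals argument applies verbatim. This is clearly what you intend, but as written a reader might momentarily wonder how ``weighted threshold of bits'' handles multi-bit summands.
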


\paragraph*{From $\+G_z$ to $\+G_{z,2}$}
\begin{proof}[Proof sketch]
Given an $m \times n$ $w$-bit signed-integer matrix $\*A' \in \+G_z$, we just need to add two more columns to make the sum of positive (and negative) entries in each row to the closet power of $2$. This can be done in $\TC^0$ in the following way. For each row $1 \leq i \leq m$, we calculate the sum of positive entries $s_i$ by checking the sign bit then do the iterated addition in $\TC^0$ by Fact \ref{fat:add}.
We then take $s = \max s_i$, which can be computed in $\AC^0$ given $s_i$'s.
$s$ has at most $O(w \log n)$ bits so given $s$ by searching brute-forcely in $\AC^0$ we can find the minimum $k$ s.t. $2^{k} \geq s$.
Therefore by Fact \ref{fat:add} we can calculate $a_i = 2^{k} - s_i$ in $\AC^0$ given $s_i$.
Then for each row $i$ we add $a_i$ and $-a_i$ to the last two columns of $\*A'$ to get $\*A''$. Additionally we need to add a new row to $\*A''$ (and to $\*b''$ accordingly) to zero out the last two variables we just added: set the last two entries into $1$ and $-1$, and set all other entries $0$, and also add a $0$ entry to $\*b'$ to get $\*b''$.

Obviously we have $\*A'' \in \+G_{z,2}$, and $\*A'\*x' = \*b'$ has a solution iff $\*A''\*x'' = \*b''$ has a solution. We can easily recover $\*x'$ from $\*x''$ by taking the first $n$ rows.
\end{proof}

For the next section we need the following definition.

\begin{definition}
We say a matrix $\*A \in \+G_{z}$ is \emph{$w$-bounded} if for each row the sum of positive coefficients is at most $2^w$. Note that in such matrix every entry is a $w$-bit signed integer.
\end{definition}

Note that the reduction from $\+G$ to $\+G_z$ reduces an $m \times n$ $w$-bit matrix $\*A$ into an $m \times (n+1)$ $O(w \log n)$-bounded matrix $\*A'$, then the reduction from $\+G_z$ to $\+G_{z,2}$ reduces $\*A'$ into an $(m+1) \times (n+3)$ $O(w \log n)$-bounded matrix $\*A''$.

\section{Simplified main reduction in $\TC^0$}
The main reduction in \cite{KyngZ17} uses the pair-and-replace scheme to transform a linear system in $\+G_{z,2}$ to a $2$-commodity equation system. The main idea is to use $\+{MC}_2\mathtt{Gadget}$s consisting of $\+{MC}_2$ equations to replace pairs of variables in the original linear system round-by-round according to the bit representation of their coefficients. 
A simplified version of the gadget, implicitly given in \cite{KyngZ17}, is as follows.

\begin{definition}[Simplified $\+{MC}_2\mathtt{Gadget}$]
Define $\MCG(t, t', j_1, j_2)$ to be the following set of 2-commodity linear equations representing ``$2x_t = x_{j_1} + x_{j_2}$'':
\begin{align*}
x_t - x_{t'+1} &= 0 \\
x_{t'+2} - x_{j_2} &= 0\\
y_{t'+1} - y_{t'+3} &= 0 \\
y_{t'+4} - y_{t'+2} &= 0 \\
x_{t'+3} - x_{j_1} &= 0\\
x_t - x_{t'+4} &= 0\\
x_{t'+4} - y_{t'+4} - (x_{t'+3} - y_{t'+3}) &= 0\\
x_{t'+1} - y_{t'+1} - (x_{t'+2} - y_{t'+2}) &= 0.
\end{align*}
For convenience we use an extra parameter $t'$ to keep track of new variables that are only used in the gadgets.
\end{definition}

Correctness of this gadget can be easily verified by summing up all the equations in it.

We now present a simplified reduction from exactly solving $\+G_{z,2}$ to exactly solving $\+{MC}_2$ in Algorithm \ref{alg:reduc1}. We use $\*A_i$ to denote the $i$-th row of $\*A$.
\begin{algorithm}[ht]
\DontPrintSemicolon
\SetKwInOut{Input}{Input}\SetKwInOut{Output}{Output}
\SetKwFunction{Gadget}{$\MCG$}

\Input{a $w$-bounded $m \times n$ matrix $\*A \in \+G_{z,2}$ and $\*c \in \RR^n$.}
\Output{a $w$-bit $m' \times n'$ matrix $\*B \in \+{MC}_2$ and $\*d \in \RR^n$.}
\BlankLine
$m' \leftarrow m$, $n' \leftarrow n$\;
$n_g \leftarrow 0$ \tcp*{\# of new variables used only in the $\+{MC}_2\mathtt{Gadget}$s}
\For{$i \leftarrow 1$ \KwTo $m$}{
	\For{$s \in \{+, -\}$}{
		\For{$k \leftarrow 1$ \KwTo $w$}{
			\While{strictly more than 1 entry in $\*A_i$ has sign $s$ and the $k$-th bit being $1$}{
				Let $j_1,j_2$ be the first and second indices of such entries in $\*A_i$\;
				In $\*A_i$, replace $2^k(x_{j_1} + x_{j_2})$ by $2^{k+1}x_{n'+1}$ (thus adding one column to the right of $\*A$)\; \label{alg:ln:replace}
				Add the coefficients of \Gadget{$n'+ n_g +1$, $n'+n_g+1$, $j_1$, $j_2$} to $\*C$\;\label{alg:ln:gadget}
				$n' \leftarrow n' + 1$\;\label{alg:ln:nnv}
				$n_g \leftarrow n_g + 4$\;
				$m' \leftarrow m' + 8$\;
			}
		}
	}
}
$n' \leftarrow 2 \times (n' + n_g)$\;\label{alg:ln:double}
Stack $\*C$ in the bottom of $\*A$ and fill in $0$'s to get $\*B$\;
Add $m' - m$ $0$'s under $\*c$ to get $\*d$\;

\caption{Simplified $\textsc{Reduce} \+G_{z,2}\textsc{To}\+{MC}_2$}\label{alg:reduc1}
\end{algorithm}

Note that in $\+{MC}_2$ we have two input variables sets $X$, $Y$ of the same size. That is why we have to multiply $n'$ by 2 at last in the reduction. But here we will only use variables in $Y$ in the $\MCG$s, and all the other $Y$ variables are unused.
For convenience we arrange the variables in the following way. 
\begin{remark}[Arrangement of variables]\label{rmk:v}
In $\*B$ we put all the variables in $X$ before those in $Y$.
More importantly, we put those $X$ variables that are only used in the gadgets behind all those $x_{n'+1}$ in Line \ref{alg:ln:replace}.
Equivalently it can be viewed as the following process.
We first run Algorithm \ref{alg:reduc1} virtually before Line \ref{alg:ln:double} to get $n'$, which is the number of $X$ variables ignoring those only used in the gadgets.
Then we run it again on the original input, starting with $n_g$ being this value, and in Line \ref{alg:ln:gadget} we use $\+{MC}_2\mathtt{Gadget}(n'+1, n_g, j_1, j_2)$ instead.
\end{remark}

We give an example showing how the reduction works under this arrangement.
\begin{example}\label{exp}
We show how the reduction runs on $3x_1 + 5x_2 + x_3+7x_4 -16x_5 =0$:
\begin{align*}
& 00011 x_1 + 00101 x_2 + 00001 x_3 + 00111x_4 - 10000x_5 = 0 \\
\xrightarrow{x_1 + x_2 - 2x_6 = 0}~& 00010x_1 + 00100x_2 + 00001x_3 + 00111x_4 + 00010x_6 -10000 x_5 =0\\
\xrightarrow{x_3 + x_4 - 2x_7 = 0}~& 00010x_1 + 00100x_2 + 00110x_4 + 00010x_6 + 00010x_7 -10000 x_5 =0\\
\xrightarrow{x_1 + x_4 - 2x_8 = 0}~& 00100x_2 + 00100x_4 + 00010x_6 + 00010x_7 + 00100x_8 -10000 x_5 =0\\
\xrightarrow{x_6 + x_7 - 2x_9 = 0}~& 00100x_2 + 00100x_4 + 00100x_8 + 00100x_9 -10000 x_5 =0\\
\xrightarrow{x_2 + x_4 - 2x_{10} = 0}~& 01000x_{10} + 00100x_8 + 00100x_9 -10000 x_5 =0\\
\xrightarrow{x_8 + x_9 - 2x_{11} = 0}~& 01000x_{10} + 01000x_{11} -10000 x_5 =0\\
\xrightarrow{x_{10} + x_{11} - 2x_{12} = 0}~& 10000x_{12} -10000 x_5 =0.
\end{align*}
We are only eliminating the positive coefficient variables in this example for simplicity.
In the first round we use new variables (and the corresponding gadgets) $x_6$ and $x_7$ to eliminate the first bit, getting the equation $2x_1 + 4 x_2 + 6 x_4 + 2x_6 + 2x_7 = 0$.
These two generated variables are then eliminated in the second round by $x_9$, in addition to $x_8$ for the second bit.
In the third round we use $x_{10}$ and $x_{11}$.
Finally in the fourth round we use $x_{12}$ and get
the equation after the reduction $16 x_{12} - 16 x_5 = 0$, with 
$\+{MC}_2\mathtt{Gadget}$s representing $2x_6 = x_1 + x_2$, $2x_7 = x_3 + x_4$, $2x_8 = x_1 + x_4$, $2x_9 = x_6 + x_7$, $2x_{10} = x_2 + x_4$, $2x_{11} = x_8 + x_9$, and $2x_{12} = x_{10} + x_{11}$.
\end{example}

Correctness of this simplified reduction follows easily by correctness of the gadget, as our transformation preserves the original solution. Moreover, given a solution $\*x^*$ such that $\*B\*x^* = \*d$ where $\*B$ and $\*d$ are obtained from running Algorithm \ref{alg:reduc1} on input $\*A$ and $\*c$, we can easily get the solution to the original equation system $\*A\*x = \*c$ by simply taking the first $n$ elements in $\*x^*$. It is also easy to get $\*d$ from $\*c$ if we can calculate $m'$ in $\TC^0$.

In the remaining of this section we are going to prove that Algorithm \ref{alg:reduc1} can be implemented in $\TC^0$.
For $1 \leq i \leq m$, $1 \leq k \leq w$, we define
\begin{align*}
\Len^+_{i}(\*A) &= \text{ log of sum of positive coefficients in $\*A_i$},\\
\CB^+_{i,k}(\*A) &= \# \text{ of positive coefficient variables in the original $\*A_i$ with the $k$-th bit $1$}, \\
\NNV^+_{i,k}(\*A) &= \# \text{ of $\+{MC}_2\textsc{Gadget}$s used for $\*A_i$ to eliminate the $k$-th bit}\\
&\phantom{= \# }\text{ of positive coefficient variables},
\end{align*}
and similarly $\Len^-_i(\*A)$, $\CB^-_{i,k}(\*A)$, $\NNV^-_{i,k}(\*A)$ for the negative coefficient variables.

\begin{example}
For the above example, we have $\Len^+_i(\*A) = 5$,
and the following values for each $k$.
\begin{table}[h]\caption{Values of $\CB^+_{i,k}(\*A)$ and $\NNV^+_{i,k}(\*A)$ for Example \ref{exp}}
\centering
\begin{tabular}{|c|c|c|c|c|c|}
\hline
$k$ & $1$ & $2$ & $3$ & $4$ & $5$ \\\hline
$\CB^+_{i,k}(\*A)$ & $4$ & $2$ & $2$ & $0$ & $0$ \\\hline
$\NNV^+_{i,k}(\*A)$ & $2$ & $2$ & $2$ & $1$ & $0$ \\\hline
\end{tabular}
\end{table}
\end{example}
We have the following simple but crucial properties for these values.
\begin{claim}\label{clm:obs}
\begin{enumerate}[(i)]
\item $\Len^s_i(\*A) \leq w$ for all $s \in \{+, -\}$, $1 \leq i \leq m$;
\item $\CB^s_{i,k}(\*A) \leq n$ for all $s \in \{+, -\}$, $1 \leq i \leq m$, $1 \leq k \leq w$;
\item For all $s \in \{+, -\}$, $1 \leq i \leq m$,
\[
\NNV^s_{i,k}(\*A) = 
\begin{cases}
2^{-(k+1)} \sum_{k' = 1}^k 2^{k'}\CB^s_{i,k'}(\*A) & \text{ for } 1 \leq k \leq \Len^s_i(\*A)-1, \\
0 & \text{ for } \Len^s_i(\*A) \leq k \leq w,
\end{cases}
\]
thus $\NNV^s_{i,k}(\*A) \leq O(n)$;
\item $m' = m + 8 \sum_{i = 1}^m \sum_{s \in \{+, -\}} \sum_{k = 1}^w \NNV^s_{i,k}(\*A) \leq O(nmw)$;
\item $n' = 2n + 10\sum_{i = 1}^m \sum_{s \in \{+, -\}} \sum_{k = 1}^w \NNV^s_{i,k}(\*A) \leq O(nmw)$.
\end{enumerate}
\end{claim}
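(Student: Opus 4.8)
The plan is to treat the five items in order; (i), (ii), (iv) and (v) are bookkeeping, and the real content is the closed form for $\NNV^s_{i,k}$ in (iii). For (i): since $\*A$ is $w$-bounded, the sum $S^+_i$ of the positive coefficients of row $i$ is at most $2^w$, so its bit-length $\Len^+_i(\*A)$ is at most $w$; as $\*A\in\+G_z$ has zero row sums, the negative coefficients sum to $-S^+_i$, so $\Len^-_i(\*A)=\Len^+_i(\*A)\le w$. Item (ii) is immediate: only $n$ columns exist, so at most $n$ of them can carry a prescribed sign together with a prescribed bit.

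For (iii) I would fix a row $i$ and a sign $s$ and track $N_k$, the number of sign-$s$ entries of the \emph{current} matrix whose $k$-th bit is $1$, measured just before Algorithm \ref{alg:reduc1} processes bit $k$ in row $i$. The heart of the proof is an induction on $k$ establishing: (a) just before bit $k$ is processed, every surviving sign-$s$ entry has bits $1,\dots,k-1$ cleared (so it is a multiple of $2^{k-1}$); (b) the pair-and-replace step of Line \ref{alg:ln:replace} preserves the row's sum of positive coefficients, which therefore still equals $S^s_i$, and since $\*A\in\+G_{z,2}$, $S^s_i$ is a power of two, so $S^s_i/2^{k-1}$ is even whenever $k<\Len^s_i(\*A)$, and as $N_k$ (the number of odd cofactors of $2^{k-1}$) has the same parity as that total, $N_k$ is even; (c) consequently the \textbf{while} loop pairs off all $N_k$ of these entries into exactly $N_k/2$ gadgets, clears bit $k$ everywhere, and introduces $N_k/2$ fresh variables, each with coefficient exactly $2^k$, i.e.\ carrying only the $(k+1)$-st bit. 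Because every previously created variable likewise carries only a single bit (one above its own creation level), the sign-$s$ entries having the $(k+1)$-st bit set just before level $k+1$ are exactly the $\CB^s_{i,k+1}(\*A)$ original ones (their high bits are untouched, as subtracting $2^j$ from a number with bit $j$ set causes no carry) together with the $N_k/2$ just-created ones. This yields the recurrence $N_{k+1}=\CB^s_{i,k+1}(\*A)+N_k/2$ with $N_1=\CB^s_{i,1}(\*A)$, which unrolls to $N_k=2^{-k}\sum_{k'=1}^{k}2^{k'}\CB^s_{i,k'}(\*A)$. Since $\NNV^s_{i,k}(\*A)$ is, by definition, the number of \textbf{while}-loop iterations for $(i,s,k)$, namely $N_k/2$, the claimed formula follows for $1\le k\le\Len^s_i(\*A)-1$; for $k\ge\Len^s_i(\*A)$ the sign-$s$ part has collapsed to the single entry of value $S^s_i$, so no pair remains and $\NNV^s_{i,k}(\*A)=0$. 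Finally $N_k\le n\sum_{k'=1}^{k}2^{-(k-k')}<2n$ by (ii), whence $\NNV^s_{i,k}(\*A)=N_k/2<n$, the $O(n)$ bound.

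Items (iv) and (v) are then read off Algorithm \ref{alg:reduc1}: each pass through the \textbf{while}-loop body installs one $\MCG$ (eight equations, so $m'$ increases by $8$), one replacement column ($n'$ increases by $1$), and four gadget-only variables ($n_g$ increases by $4$), and the body runs a total of $\sum_{i,s,k}\NNV^s_{i,k}(\*A)$ times. Hence before Line \ref{alg:ln:double} we have $m'=m+8\sum_{i,s,k}\NNV^s_{i,k}(\*A)$, $n'=n+\sum_{i,s,k}\NNV^s_{i,k}(\*A)$ and $n_g=4\sum_{i,s,k}\NNV^s_{i,k}(\*A)$, and that line sets $n'=2(n'+n_g)=2n+10\sum_{i,s,k}\NNV^s_{i,k}(\*A)$ (the doubling also creates the unused $Y$-copies). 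Combining with $\NNV^s_{i,k}(\*A)\le O(n)$ over the $2mw$ triples $(i,s,k)$ — or the sharper $\sum_{s,k}\NNV^s_{i,k}(\*A)\le\sum_{s,k}\CB^s_{i,k}(\*A)\le nw$ per row — gives $\sum_{i,s,k}\NNV^s_{i,k}(\*A)\le O(nmw)$ and therefore $m',n'\le O(nmw)$.

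The only genuinely non-routine point is step (b): one must notice that ``the sum of positive coefficients in each row is a power of two'' is precisely invariant under the pair-and-replace step, and that this is exactly what forbids a leftover entry at any bit position — keeping both the iteration count equal to $N_k/2$ and the ``lower bits cleared'' invariant exact — and hence makes the recurrence, and the clean closed form, correct. Everything else is an arithmetic unrolling of that recurrence or a direct inspection of the pseudocode.
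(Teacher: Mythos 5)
Your proposal is correct and follows essentially the same route as the paper: it derives the recurrence $\NNV^s_{i,1}=\CB^s_{i,1}/2$ and $\NNV^s_{i,k}=(\CB^s_{i,k}+\NNV^s_{i,k-1})/2$ from the fact that each round eliminates all variables generated in the previous round, uses the power-of-two row-sum property of $\+G_{z,2}$ to guarantee these counts are integers, and unrolls the recurrence to the closed form, with (i), (ii), (iv), (v) read off the definitions and the pseudocode. Your write-up merely makes explicit the invariants (lower bits cleared, no carries, parity of $N_k$) that the paper leaves implicit.
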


\begin{proof}
(i) and (ii) are trivial by definition.
(iv) and (v) are straightforward from Algorithm \ref{alg:reduc1} and (iii).
For (iii), note that in each round for $k$ we will eliminate all the  variables generated in the previous round for $k-1$ by construction (ignoring those variables that are only used in the gadgets), therefore we have $\NNV^s_{i,1}(\*A) = \CB^s_{i,1}(\*A)/2$ and
$\NNV^s_{i,k}(\*A) = (\CB^s_{i,k}(\*A) + \NNV^s_{i, k-1}(\*A))/2$ for $2\leq k \leq \Len^s_i(\*A) -1$.
Here we rely on the property that the sum of positive (and negative) coefficients in each row is a power of $2$ to ensure that $\NNV^s_{i,k}(\*A)$ as calculated in this way are always integers.
Then by induction we get the formula.  
\end{proof}

Note that in (iii) $2^{-(k+1)}$ and $2^{k'}$ are just right and left shifts, which can be easily implemented in the circuit model. 
Combining Claim \ref{clm:obs} and Fact \ref{fat:add}, we can see that all of these values can be computed in $\TC^0$ for all $i, k, s$, i.e. the depths of the $\TC$ circuits are absolute constants independent of $i$, $k$, and $s$.
\begin{lemma}[Informal]\label{lem:aux}
\begin{itemize}
\item $\Len^s_i$, $\CB^s_{i,k}$, and $\NNV^s_{i,k}$ have $\TC^0$ circuits of size $\poly(n,w)$ for all $i, k, s$;
\item $n', m'$ has $\TC^0$ circuits of size $\poly(n,m,w)$.
\end{itemize}
\end{lemma}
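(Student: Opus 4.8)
The plan is to show that each of the quantities $\Len^s_i$, $\CB^s_{i,k}$, $\NNV^s_{i,k}$, $n'$, and $m'$ is computed by a bounded-depth, polynomial-size threshold circuit, working from the innermost quantities outward and using only the two facts in Fact~\ref{fat:add} together with the closed-form expressions in Claim~\ref{clm:obs}. First I would handle $\CB^s_{i,k}(\*A)$: for a fixed $i$, $k$, $s$, this is just the number of columns $j$ such that the entry $\*A_{ij}$ has sign $s$ and its $k$-th magnitude bit is $1$. Reading the sign bit and the $k$-th bit of each entry in the sign-magnitude encoding is a projection (zero-depth), forming the conjunction of the two is depth one in $\AC^0$, and counting how many of the $n$ resulting bits are $1$ is iterated addition of $n$ one-bit numbers, which is in $\TC^0$ of size $\poly(n)$ by the second item of Fact~\ref{fat:add}. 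Doing this in parallel for all $i \le m$, $k \le w$, and $s \in \{+,-\}$ only multiplies the size by $\poly(m,w)$ and leaves the depth an absolute constant.

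Next I would compute $\Len^s_i(\*A)$, the logarithm of the sum of positive (resp. negative) coefficients of row $i$. The sum itself is an iterated addition of the $n$ magnitudes with sign $s$ (again $\TC^0$ of size $\poly(n,w)$ by Fact~\ref{fat:add}), and since the matrix is $w$-bounded this sum is a power of $2$ with at most $w+1$ bits; extracting its base-$2$ logarithm is then just locating the unique $1$ in its binary representation, an $\AC^0$ operation. For $\NNV^s_{i,k}(\*A)$ I would use part~(iii) of Claim~\ref{clm:obs} directly: multiplying $\CB^s_{i,k'}(\*A)$ by $2^{k'}$ and dividing the total by $2^{k+1}$ are hardwired bit-shifts, the inner sum $\sum_{k'=1}^k 2^{k'}\CB^s_{i,k'}(\*A)$ is an iterated addition of at most $w$ numbers each of $O(\log n + w)$ bits (in $\TC^0$ by Fact~\ref{fat:add}), and the case distinction on whether $k \le \Len^s_i(\*A)-1$ is a single comparison of two $O(\log w)$-bit numbers, selectable in $\AC^0$. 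Composing a constant number of such $\TC^0$ and $\AC^0$ layers keeps us in $\TC^0$, and the overall size is $\poly(n,w)$ per triple $(i,k,s)$, hence $\poly(n,m,w)$ for all of them.

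Finally, $m'$ and $n'$ are given by parts~(iv) and~(v) of Claim~\ref{clm:obs} as fixed linear combinations (coefficients $8$ and $10$, plus the $m$ or $2n$ term) of the $O(mw)$ many $\NNV^s_{i,k}(\*A)$ values, each of which is $O(n)$ and so has $O(\log n)$ bits; this is again an iterated addition of $\poly(m,w)$ numbers of $O(\log n)$ bits, in $\TC^0$ of size $\poly(n,m,w)$ by Fact~\ref{fat:add}, followed by a fixed shift-and-add. I expect the only subtlety — not really an obstacle — to be bookkeeping the bit-lengths so that all intermediate additions genuinely satisfy the hypotheses of Fact~\ref{fat:add} (i.e. the operands are $\poly$-many numbers of $\poly$-many bits), and confirming that the divisions by powers of $2$ in Claim~\ref{clm:obs}(iii) are exact, which is exactly the integrality guarantee proved there from the power-of-$2$ row-sum condition, so no rounding circuitry is needed. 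Everything else is a routine composition of the primitives in Fact~\ref{fat:add}.
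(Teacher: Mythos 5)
Your proposal is correct and matches the paper's argument: the paper proves this lemma exactly by combining Claim~\ref{clm:obs} with Fact~\ref{fat:add}, noting that the powers of $2$ in the closed form for $\NNV^s_{i,k}$ are hardwired shifts and that everything else is iterated addition of polynomially many polynomial-length numbers. You have simply spelled out the same composition in more detail, including the correct observations that extracting $\Len^s_i$ from a power-of-two sum is an $\AC^0$ bit-location and that the divisions in Claim~\ref{clm:obs}(iii) are exact by the power-of-two row-sum property.
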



Now we can prove that the reduction from $\+G_{z,2}$ to $\+{MC}_2$ can be done in $\TC^0$. We represent the input matrix $\*A$ explicitly by giving all its entries in sign-magnitude form, and the output matrix $\*B$ implicitly by outputting the entry of $\*B$ in row $i$ column $j$ with $i$,$j$ as additional inputs.
\begin{theorem}
There is a $\TC^0$ circuit family $\{C_{n,m,w}\}_{n,m,w \in \mathbb{N}}$ of size $\poly(n,m,w)$ with $(O(nmw) + O(\log nmw))$-bit input and $O(w)$-bit output such that:
\begin{itemize}
\item the first $O(nmw)$ bits of input represent a $w$-bounded $m \times n$ matrix $\*A \in \+G_{z,2}$ in sign-magnitude form, the next $O(\log nmw)$ bits of input represent an index $i$, and the last $O(\log nmw)$ bits represent another index $j$;
\item for $i \leq m'$ and $j \leq n'$ where $m', n' \leq O(nmw)$ is calculated as in Algorithm \ref{alg:reduc1} on $\*A$,  the output represents the entry of $\*B$ (in sign-magnitude form) in row $i$ column $j$ as calculated by Algorithm \ref{alg:reduc1} on $\*A$; otherwise it represents $0$.
\end{itemize} 
\end{theorem}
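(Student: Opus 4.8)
The plan is to reduce the whole task to a bounded number of $\TC^0$ subcomputations, organized by a case analysis on which ``region'' of the output matrix $\*B$ the queried position $(i,j)$ lies in. By Lemma~\ref{lem:aux}, all of $\Len^s_i(\*A)$, $\CB^s_{i,k}(\*A)$, $\NNV^s_{i,k}(\*A)$, $m'$, $n'$ are computable by $\poly(n,m,w)$-size $\TC^0$ circuits, and by iterated addition (Fact~\ref{fat:add}) so are the cumulative gadget counts we will need, e.g.\ $G_{<i}:=\sum_{i'<i}\sum_{s\in\{+,-\}}\sum_{k=1}^w\NNV^s_{i',k}(\*A)$ --- the number of $\MCG$s created while Algorithm~\ref{alg:reduc1} processes rows $1,\dots,i-1$ --- and the analogous partial sums stopping at a prescribed (row, sign, round). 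Write $N$ for the total number of gadgets, so that $m'=m+8N$ and $n'=2(n+5N)$. Under the arrangement of Remark~\ref{rmk:v} the $X$-columns of $\*B$ are the $n$ original columns, the columns $n+1,\dots,n+N$ in which $n+g$ is the ``main'' variable $x_t$ of the $g$-th gadget created, and the $4N$ gadget-only columns $n+N+1,\dots,n+5N$; the $Y$-columns $n+5N+1,\dots,n'$ mirror these, only the last $4N$ of them ever being used. The circuit first computes $m',n',N$, outputs $0$ whenever $i>m'$, $j>n'$, or $j$ is an unused $Y$-column, and otherwise branches on whether $i\le m$ (a transformed original row) or $i>m$ (a gadget row).

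\emph{Transformed original rows.} For $i\le m$ I claim row $i$ of $\*B$ has no nonzero $Y$-entry and, for each $s\in\{+,-\}$, exactly one nonzero $X$-entry of sign $s$, equal to $s\cdot 2^{\Len^s_i(\*A)-1}$: if $\NNV^s_{i,\Len^s_i-1}(\*A)=1$, this entry lies in the ``last $s$-replacement column of row $i$'' (namely $n+G_{<i}+\sum_k\NNV^+_{i,k}(\*A)$ when $s=+$, and $n+G_{<i}+\sum_k\NNV^+_{i,k}(\*A)+\sum_k\NNV^-_{i,k}(\*A)$ when $s=-$); otherwise $\NNV^s_{i,\Len^s_i-1}(\*A)=0$ and $\CB^s_{i,\Len^s_i}(\*A)=1$, and the entry lies in the unique original column $\le n$ whose $\*A$-entry has sign $s$ and $\Len^s_i$-th bit equal to $1$. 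This follows from the same reasoning as Claim~\ref{clm:obs}(iii): since the $s$-side sum of every row is a power of $2$, in each active round $k<\Len^s_i$ the count $\CB^s_{i,k}+\NNV^s_{i,k-1}$ of $s$-entries whose $k$-th bit is $1$ is even, so the round pairs all of them off and clears that bit, while for $k=\Len^s_i$ the count is exactly $1$, leaving one surviving entry of coefficient $2^{\Len^s_i-1}$. Given $(i,j)$ the circuit tests, for $s=+$ and $s=-$, whether $j$ equals this surviving column --- a few equality/threshold checks on the $\TC^0$ quantities above, plus, in the original-column subcase, a sign-bit and single-bit lookup into $\*A_{ij}$ --- and outputs $\pm 2^{\Len^s_i-1}$ on a match and $0$ otherwise; each possible value ($0$, $\pm1$, or $\pm2^{\ell}$ with $\ell\le w$) is trivial to emit in sign-magnitude form.

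\emph{Gadget rows.} For $i>m$ write $i-m=8(g-1)+r$ with $g\in\{1,\dots,N\}$, $r\in\{1,\dots,8\}$, recovered by integer division. Gadget $g$ is $\MCG(t,t',j_1,j_2)$ with $t=n+g$ and $t'=n+N+4(g-1)$ (both determined by $g$), and row $r$ of it is one of the eight fixed templates in the definition of $\MCG$; each template has nonzero entries only in the columns $t,t'+1,\dots,t'+4$, in the $Y$-columns corresponding to $y_{t'+1},\dots,y_{t'+4}$, and in $x_{j_1},x_{j_2}$, every entry being $\pm1$. The circuit converts $j$ into the corresponding $x$- or $y$-index and compares it against these $O(1)$ candidate positions; one checks that $j_1,j_2<t$ and $j_1,j_2\le t'$, so the candidates are pairwise distinct and the lookup is unambiguous. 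The only non-immediate ingredient is deciding whether $j$ equals $j_1$ or $j_2$, which brings us to the main obstacle below.

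\emph{Identifying $j_1,j_2$.} This is the one place where the evolving state of Algorithm~\ref{alg:reduc1} matters. From the cumulative gadget counts, the circuit determines the (row $i'$, sign $s$, round $k$) that produced gadget $g$ and the index $p\in\{1,\dots,\NNV^s_{i',k}(\*A)\}$ of $g$ among that round's pairings, so that $j_1,j_2$ are the $(2p-1)$-st and $(2p)$-th entries, in index order, among the sign-$s$ entries of $\*A_{i'}$ whose $k$-th bit is $1$ at the moment round $k$ begins. By the parity fact used above, these entries occur precisely in this order: first the $\CB^s_{i',k}(\*A)$ original columns $\le n$ with sign $s$ and $k$-th bit $1$, in increasing order, then the $\NNV^s_{i',k-1}(\*A)$ replacement columns created in round $k-1$ of row $i'$ with sign $s$, in order of creation (equivalently of index). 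Hence ``the $q$-th of these entries'' is: if $q\le\CB^s_{i',k}$, the column $j\le n$ with sign $s$, $k$-th bit $1$, and $\sum_{j'\le j}[\,\*A_{i'j'}\text{ has sign }s\text{ and }k\text{-th bit }1\,]=q$ --- an iterated sum, hence $\TC^0$; and if $q>\CB^s_{i',k}$, simply $n+g'$ where $g'$ equals (the number of gadgets created before round $k-1$ of row $i'$ with sign $s$) plus $q-\CB^s_{i',k}$, again $\TC^0$. As the circuit need only test $j=j_1$ and $j=j_2$ rather than materialize them, a constant number of such rank/count comparisons suffices; combining the three regions then gives a constant-depth, $\poly(n,m,w)$-size $\TC^0$ circuit family. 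I expect this last step to be the main obstacle --- specifically, establishing from the parity structure that the to-be-paired entries of a round always appear ``original columns first, then previous-round replacement columns'', which is exactly what makes $j_1,j_2$ a $\TC^0$-computable function of $g$.
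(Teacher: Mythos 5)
Your proposal is correct and follows essentially the same route as the paper: compute $\NNV^s_{i,k}$ and its prefix sums in $\TC^0$, split on whether row $i$ is a transformed original row or a gadget row, recover the gadget's $(i',s,k,\ell')$ data by parallel comparison against prefix sums, and locate $j_1,j_2$ by rank queries into the ordered list of bit-$k$ entries. Your treatment of $j_1,j_2$ as the $(2p-1)$-st and $(2p)$-th elements of the merged ``original columns first, then previous-round replacement columns'' list even handles uniformly the odd-$\CB^{s}_{i',k}$ case that the paper treats as a separate sub-case.
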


\begin{proof}
Consider the $i$-th row of $\*B$, we need to know the equation it corresponds to. Because we put those equations of $\MCG$s behind the modified equations of $\*A$, we have two cases:
\begin{itemize}
\item $i \leq m$: in this case, the equation is $\*A_i$ after the transformation, which has the form $C (x_{j_+} - x_{j_-}) = \*c_i$ with $C = \underbrace{100\cdots 0}_{\Len^+_i(\*A)\text{ bits}}$ in binary since our transformation does not change the sum of positive (and negative) coefficients in $\*A$. What remains is to calculate the indices $j_+$ and $j_-$ in $\TC^0$.

Let $\SNNV^s_i(\*A) = \sum_{k=1}^w \NNV^s_{i,k}$ for all $s \in \{+,-\}$ and $1 \leq i \leq m$. For each $s \in \{+, -\}$, there are two cases:
	\begin{itemize}
	\item No new variable is added, i.e. $\SNNV^s_i(\*A) = 0$: there is only one entry in the original $\*A_i$ with the corresponding sign thus $j_s$ is the index of this entry. We search for this entry to get its index, which can be implemented in $\AC^0$ because there are $n$ possibilities.
	\item Some new variables are added: then $j_s$ is the index of the last added new variable, ignoring those only in $\MCG$s. It can be calculated by
\begin{align*}
j_s = \begin{cases}
n + \SNNV^+_i(\*A) + \sum_{i'=1}^{i-1}\sum_{s'\in\{+,-\}}\SNNV^{s'}_{i'}(\*A) &\text{ if } s = +, \\
n + \sum_{i'=1}^{i}\sum_{s'\in\{+,-\}}\SNNV^{s'}_{i'}(\*A) &\text{ if } s = -.
\end{cases}
\end{align*}
	\end{itemize}

$\SNNV^s_i \in \TC^0$ by Fact \ref{fat:add} and Lemma \ref{lem:aux} so both $j_+, j_-$ can be calculated in $\TC^0$ by Fact \ref{fat:add}. Therefore we can calculate the entries in row $i$ in $\TC^0$ for $i \leq m$.

\item $i > m$: in this case, this equation is in an $\MCG(t, t', j_1, j_2)$ gadget for some $t, t', j_1, j_2$. We need to calculate $t, t', j_1$, and $j_2$ in $\TC^0$ then it is easy to recover the equation from the offset in the gadget.

We can first calculate in $\AC^0$ the gadget's index $ind = \lfloor \frac{i-m-1}{8}\rfloor+1$ as $\lfloor \frac{\cdot}{8}\rfloor$ is just ignoring the three least significant bits.
By Algorithm \ref{alg:reduc1} and Remark \ref{rmk:v}, we know $t = n + ind$ thus it is $\AC^0$-computable,
and we can also calculate $t' = n + \sum_{i'=1}^m\sum_{s' \in \{+,-\}} \SNNV^{s'}_{i'}(\*A) + 4 (ind - 1)$ in $\TC^0$ by Fact \ref{fat:add}.

Then we can calculate the number $i'$, $k'$, sign $s' \in \{+,-\}$, and number $\ell'$ such that this gadget is the $\ell'$-th gadget we created to eliminate the $k'$-th bit of $\*A_{i'}$ for variables with sign $s'$. More specifically, we want to find the minimum $i'$, $k'$, $s'$ (where we treat `$+$' $<$ `$-$') such that 
\begin{align*}
\PreS^+_{i',k'}(\*A)
 < ind \leq \PreS^+_{i',k'}(\*A) + \NNV^+_{i',k'} & \text{ if } s' = +,\\
\PreS^-_{i',k'}(\*A) < ind \leq \PreS^-_{i',k'}(\*A) + \NNV^-_{i',k'} & \text{ if } s' = -,
\end{align*}
where
\begin{align*}
\PreS^+_{i',k'}(\*A) &= \sum_{i''=1}^{i'-1}\sum_{s''\in\{+,-\}}\SNNV^{s''}_{i''}(\*A)
+\sum_{k''=1}^{k'-1}\sum_{s''\in\{+,-\}}\NNV^{s''}_{i',k''}(\*A),\\
\PreS^-_{i',k'}(\*A) &= \PreS^+_{i',k'}(\*A) + \NNV^+_{i',k'}.
\end{align*}

There are $O(mw)$ possible choices for $i'$, $k'$, $s'$ and each condition is a prefix sum of $\NNV^s_{i,k}$'s, therefore
this can be done in $\TC^0$ by a parallel comparison of $ind$ to the prefix sums of $\NNV^s_{i,k}$'s. After getting $i'$, $k'$, and $s'$, we can get $\ell' = ind - \PreS^{s'}_{i',k'}$ by Fact \ref{fat:add}.

What remains is to calculate $j_1$ and $j_2$ from $i'$, $k'$, $s'$, and $\ell'$. Note that when eliminating the $k'$-th bit of $\*A_{i'}$ for variables with sign $s'$, we first eliminate in pairs those variables in the original $\*A_{i'}$ that have $1$ in the $k'$-th bit before the reduction (we call them \emph{original pairs}), then eliminate in pairs those generated in the previous round for $i'$, $k'-1$, and $s'$. The number of original pairs is given by $p = \lfloor \CB^{s'}_{i',k'}(\*A) / 2\rfloor$, computable in $\TC^0$. There are two cases:
	\begin{itemize}
	\item $\ell' \leq p$: $j_1$ and $j_2$ is the indices of variables in the $\ell'$-th original pairs, which are the indices of the $(2\ell'-1)$-th and $2\ell'$-th variables in $\*A_{i'}$ that have $1$ in the $k'$-th bit. Similarly as above, this can be done by a simple parallel comparison to prefix sums of the $k'$-th bits of variables in $\*A_{i'}$.
	\item $\ell' > p$: then $j_1$ and $j_2$ are the $(2(\ell'-p)-1)$-th and $2(\ell'-p)$-th new variables generated in the previous round, therefore $j_1 = \PreS^{s'}_{i', k'-1}(\*A) + 2(\ell'-p)-1$ and $j_2 = j_1 + 1$.
	\end{itemize}
However the second case above only works for even $\CB^{s'}_{i',k'}(\*A)$. When it is odd, $\ell' = p+1$ corresponds to a pair with the last original variable in this round and the first generated variables in the previous round, thus $j_1$ can be calculated as in the first case and $j_2 = \PreS^{s'}_{i', k'-1}(\*A) + 1$. Then for $\ell' > p+1$, we have $j_1 = \PreS^{s'}_{i', k'-1}(\*A) + 2(\ell'-p)-2$ and $j_2 = j_1 + 1$.
\end{itemize}

In conclusion, given $i$ as input, we can first check if $i \leq m'$ in $\TC^0$ by Lemma \ref{lem:aux} and if so compute coefficients of $\*B_i$ in $\TC^0$, therefore with $j$ as input, we check if $j \leq n'$ in $\TC^0$ and if so compute the entry of $\*B$ in row $i$ column $j$ in $\TC^0$.
\end{proof}

\section{Genearlization to approximate solvers, and more restrictive classes}
Now we are going to generalize the above results of reductions for exact solvers into those for approximate solvers, thus proving Theorem \ref{thm:main}. First we need the following result showing the power of $\TC^0$.
\begin{fact}\label{fat:tc0}
Division and iterated multiplication are in $\mathsf{DLOGTIME}$-uniform $\TC^0$ \cite{HesseAB02}. Moreover, we can approximate in $\TC^0$ functions represented by sufficiently nice power series, such as $\log$, $\exp$, and $x^{1/k}$ \cite{ReifT92,MacielT99,HesseAB02}.
\end{fact}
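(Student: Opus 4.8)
The plan is essentially to invoke the cited works, since both parts are established; I sketch the structure so the reader can see why they hold. For iterated multiplication of $n$ $w$-bit integers, I would recall the Beame--Cook--Hoover approach: reduce the product to (i) computing it modulo each of polynomially many $O(\log n)$-bit primes, where each modular product is a balanced binary tree of single multiplications and a single multiplication is itself an iterated addition (hence this stage is in $\TC^0$ by Fact~\ref{fat:add}), and (ii) Chinese-remainder reconstruction, which can be written as one iterated sum with precomputed coefficients. Division of $w$-bit numbers then follows by computing a reciprocal $1/y$ via a constant number of Newton steps $z_{i+1} = z_i(2 - y z_i)$, each step being a bounded number of multiplications, hence iterated multiplication overall; the additional content of \cite{HesseAB02} is that the whole construction is $\mathsf{DLOGTIME}$-uniform, which I would cite as a black box.

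For approximating $\log$, $\exp$, and $x^{1/k}$ to $n$ bits of precision, the plan is range reduction plus a truncated power series. For $\exp$ I would write the argument as $x = \ell \ln 2 + r$ with $\ell \in \ZZ$ and $|r|$ small, realize $2^{\ell}$ as a shift, and approximate $e^r$ by $\sum_{j=0}^{N} r^j / j!$ with $N = \poly(n)$: each term is an iterated product of copies of $r$ followed by one division by the poly-bit integer $j!$, and the $N+1$ terms are combined by iterated addition, all in $\TC^0$ by the first part and Fact~\ref{fat:add}; the truncation error is $|r|^{N+1}/(N+1)! \le 2^{-n}$ for polynomial $N$. For $\log$, symmetrically reduce to $\ln(1+u)$ with $|u| \le 1/2$ by pulling out the leading power of two and use $\sum_{j \ge 1} (-1)^{j+1} u^j / j$; for $x^{1/k}$ reduce to $(1+u)^{1/k}$ and use the binomial series $\sum_j \binom{1/k}{j} u^j$, whose coefficients are poly-bit rationals. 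In each case convergence is geometric in $N$, so $\poly(n)$ terms give $n$ bits of accuracy, and every operation used (shifts, iterated multiplication, division, iterated addition) is in $\TC^0$.

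The hard part is not the numerical analysis but the uniformity: checking that the primes, CRT coefficients, Taylor and binomial coefficients, and the wiring of all of these circuits can be generated by a $\mathsf{DLOGTIME}$ machine. This is precisely the technical heart of \cite{HesseAB02}, \cite{ReifT92}, and \cite{MacielT99}, so in this paper I would rely on their statements directly rather than reprove them.
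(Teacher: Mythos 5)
The paper offers no proof of this Fact at all---it is stated purely as a citation to \cite{HesseAB02,ReifT92,MacielT99}---so your plan of deferring to those works for the actual content (especially $\mathsf{DLOGTIME}$-uniformity) is exactly what the paper does, and your expansion of the standard CRT argument for iterated multiplication and the range-reduction-plus-truncated-series argument for $\log$, $\exp$, and $x^{1/k}$ is sound.

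One concrete slip in your sketch of division: a \emph{constant} number of Newton steps $z_{i+1} = z_i(2 - y z_i)$ does not suffice. Each step only doubles the number of correct bits, so reaching $w$ bits of precision from a coarse initial approximation requires $\Theta(\log w)$ steps, and since each step consumes the output of the previous one this yields depth $\Theta(\log w)$, not constant depth. The standard constant-depth route (the one underlying \cite{HesseAB02}) instead normalizes $y$ so that $|1-y| \le 1/2$ and computes the reciprocal as a single truncated geometric series $\sum_{i=0}^{N} (1-y)^i$ with $N = \poly(w)$, which is one layer of iterated multiplication followed by one iterated addition---i.e., exactly the same shape as your treatment of $\exp$ and $\log$. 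With that substitution your sketch is correct throughout; the uniformity claims remain, as you say, black boxes from the cited papers.
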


\begin{proof}[Proof sketch of Theorem \ref{thm:main}]
Based on our proofs on the simplified reductions, we are going to prove that the original reductions from $\+G$ to $\+{MC}_2$ in \cite{KyngZ17} can be computed in $\TC^0$.
In the context of approximate solvers, the error $\e$ that solvers are required to achieve is also part of the instance.
By Fact \ref{fat:tc0}, all the errors in the reduced instances, as defined in the reductions in Kyng and Zhang's full paper \cite{KyngZ17full}, can be computed in $\TC^0$ given the size, condition number, and bit-complexity of the original matrix as parameters.
\begin{itemize}
\item \textbf{From $\+G$ to $\+G_z$}: this reduction remains the same, but now knowing the accuracy we can recover $\*x'$ from $\*x$ using fix-point arithmetic in $\TC^0$.
\item \textbf{From $\+G_z$ to $\+G_{z,2}$}: the only difference (besides the calculation of accuracy) is that in the original reduction in \cite[Section~7.2]{KyngZ17full}, in the last row of the reduced matrix $\*A'$ the last two entries are set to be $w$ and $-w$ for some value $w$, instead of $1$ and $-1$ as in our simplified version. However $w$ is computable in $\TC^0$ by Fact \ref{fat:tc0} so the reduction is still computable in $\TC^0$.
\item \textbf{From $\+G_{z,2}$ to $\+{MC}_2$}: for approximate solvers we will have to use the original $\+{MC}_2\mathtt{Gadget}$ in \cite[Section~4]{KyngZ17full} consisting of ten $2$-commodity equations instead of eight and with additional variables.
So we need to modify the corresponding numbers in our calculation, and in particular the gedget's index becomes $ind = \lfloor \frac{i -m - 1}{10} \rfloor +1$, which is still $\TC^0$-computable.
Besides, the equations in the $\+{MC}_2\mathtt{Gadget}$ for eliminating the $k$-th bit of variables with sign $s$ in $\*A_i$ will be multiplied by a factor $-s \cdot 2^k \cdot w_i$, where $w_i = \sqrt{10 \sum_{s' \in \{+,-\}} \SNNV^s_i(\*A)}$, as specified in Algorithm 1 of \cite{KyngZ17full}.
These factors can be calculated with desired accuracy in $\TC^0$ by Fact \ref{fat:tc0}.
Therefore the reduction is still computable in $\TC^0$.
\end{itemize}
Additionally in the second and third reduction, when recovering $\*x'$ from $\*x$ we need to check if the original matrix $\*A$ and vector $\*b$ satisfy $\*A^\top \*b = \*0$ and simply return $\*0$ if so. This can be done in $\TC^0$ by Fact \ref{fat:add}.
In conclusion, we can reduce the problem of approximately solving equations in $\+G$ to approximately solving equations in $\+{MC}_2$ in $\TC^0$.
\end{proof}

In \cite{KyngZ17} they also considered some more restrictive subclasses of $\+{MC}_2$.
Intuitively, the set of \emph{strict $2$-commodity matrices} $\+{MC}^{>0}_2 \subset \+{MC}_2$ is the class of 2-commodity equations $\*A$ such that for every pair $i, j$, equation $x_i - x_j = 0 \in \*A$ iff equation $y_i - y_j = 0 \in \*A$ iff equation $x_i - y_i - (x_j - y_j) = 0 \in \*A$.
The set of strict $2$-commodity matrices with integer entries is denoted by $\+{MC}^{>0}_{2,\ZZ}$.
They showed that reductions from approximately solving $\+{MC}_2$ to approximately solving $\+{MC}^{>0}_2$, and from $\+{MC}^{>0}_2$ to $\+{MC}^{>0}_{2,\ZZ}$.
We are going to show that these reductions can be computed in $\TC^0$.
\paragraph*{From $\+{MC}_2$ to $\+{MC}^{>0}_2$}
\begin{proof}[Proof sketch]
The reduction, as defined in \cite[Section~5.1]{KyngZ17full}, runs by checking for each pair $i,j$ that are involved in some equation in $\*A$ if any of the three types of equations is missing and add it if so.
The added equations will be multiplied by a factor that is computable in $\TC^0$. Obviously the resulting equation systems is in $\+{MC}^{>0}_2$. It is easy to see that the number of added equations for each pair $i,j$ can be computed in $\AC^0$, thus all the prefix sums of these numbers can be calculated in $\TC^0$ simultaneously, and so we can determine the equations in the reduced instance in $\TC^0$.
\end{proof}
\paragraph*{From $\+{MC}^{>0}_2$ to $\+{MC}^{>0}_{2,\ZZ}$}
\begin{proof}[Proof sketch]
In \cite[Section~6]{KyngZ17full} it is done by scaling up all the numbers in the matrix and the vector by a factor of $2^k$, where $k$ is computable in $\TC^0$ by Fact \ref{fat:tc0}, then take the ceiling function on entries of the matrix to convert them into integer entries, which also can be done in $\TC^0$.
\end{proof}

\bibliography{tc0}

\begin{thebibliography}{10}

\bibitem{CloteK02}
Peter Clote and Evangelos Kranakis.
\newblock {\em Boolean Functions and Computation Models}.
\newblock Texts in Theoretical Computer Science. An {EATCS} Series. Springer,
  2002.

\bibitem{CohenKMPPRX14}
Michael~B. Cohen, Rasmus Kyng, Gary~L. Miller, Jakub~W. Pachocki, Richard Peng,
  Anup~B. Rao, and Shen~Chen Xu.
\newblock Solving {SDD} linear systems in nearly
  \emph{m}log\({}^{\mbox{1/2}}\)\emph{n} time.
\newblock In {\em Symposium on Theory of Computing, {STOC} 2014, New York, NY,
  USA, May 31 - June 03, 2014}, pages 343--352, 2014.

\bibitem{DoronGT17}
Dean Doron, Fran{\c{c}}ois~Le Gall, and Amnon Ta{-}Shma.
\newblock Probabilistic logarithmic-space algorithms for laplacian solvers.
\newblock In {\em Approximation, Randomization, and Combinatorial Optimization.
  Algorithms and Techniques, {APPROX/RANDOM} 2017, August 16-18, 2017,
  Berkeley, CA, {USA}}, pages 41:1--41:20, 2017.

\bibitem{Gall14a}
Fran{\c{c}}ois~Le Gall.
\newblock Powers of tensors and fast matrix multiplication.
\newblock In {\em International Symposium on Symbolic and Algebraic
  Computation, {ISSAC} '14, Kobe, Japan, July 23-25, 2014}, pages 296--303,
  2014.

\bibitem{HesseAB02}
William Hesse, Eric Allender, and David A.~Mix Barrington.
\newblock Uniform constant-depth threshold circuits for division and iterated
  multiplication.
\newblock {\em J. Comput. Syst. Sci.}, 65(4):695--716, 2002.

\bibitem{ImmermanL89}
Neil Immerman and Susan Landau.
\newblock The complexity of iterated multiplication.
\newblock In {\em Proceedings: Fourth Annual Structure in Complexity Theory
  Conference, University of Oregon, Eugene, Oregon, USA, June 19-22, 1989},
  pages 104--111, 1989.

\bibitem{KyngZ17}
Rasmus Kyng and Peng Zhang.
\newblock Hardness results for structured linear systems.
\newblock In {\em 58th {IEEE} Annual Symposium on Foundations of Computer
  Science, {FOCS} 2017, Berkeley, CA, USA, October 15-17, 2017}, pages
  684--695, 2017.

\bibitem{KyngZ17full}
Rasmus Kyng and Peng Zhang.
\newblock Hardness results for structured linear systems.
\newblock {\em CoRR}, abs/1705.02944, 2017.
\newblock \href {http://arxiv.org/abs/1705.02944} {\path{arXiv:1705.02944}}.

\bibitem{MacielT99}
Alexis Maciel and Denis Th{\'{e}}rien.
\newblock Efficient threshold circuits for power series.
\newblock {\em Inf. Comput.}, 152(1):62--73, 1999.

\bibitem{MurtaghRSV17}
Jack Murtagh, Omer Reingold, Aaron Sidford, and Salil~P. Vadhan.
\newblock Derandomization beyond connectivity: Undirected laplacian systems in
  nearly logarithmic space.
\newblock In {\em 58th {IEEE} Annual Symposium on Foundations of Computer
  Science, {FOCS} 2017, Berkeley, CA, USA, October 15-17, 2017}, pages
  801--812, 2017.

\bibitem{ReifT92}
John~H. Reif and Stephen~R. Tate.
\newblock On threshold circuits and polynomial computation.
\newblock {\em {SIAM} J. Comput.}, 21(5):896--908, 1992.

\bibitem{SaksZ95}
Michael~E. Saks and Shiyu Zhou.
\newblock {RSPACE(S)} {\textbackslash}subseteq dspace(s\({}^{\mbox{3/2}}\)).
\newblock In {\em 36th Annual Symposium on Foundations of Computer Science,
  Milwaukee, Wisconsin, USA, 23-25 October 1995}, pages 344--353, 1995.

\bibitem{SpielmanT14}
Daniel~A. Spielman and Shang{-}Hua Teng.
\newblock Nearly linear time algorithms for preconditioning and solving
  symmetric, diagonally dominant linear systems.
\newblock {\em {SIAM} J. Matrix Analysis Applications}, 35(3):835--885, 2014.

\bibitem{Ta-Shma13}
Amnon Ta{-}Shma.
\newblock Inverting well conditioned matrices in quantum logspace.
\newblock In {\em Symposium on Theory of Computing Conference, STOC'13, Palo
  Alto, CA, USA, June 1-4, 2013}, pages 881--890, 2013.

\end{thebibliography}

\end{document}